\newtheorem{theorem}{Theorem}
\newtheorem{lemma}{Lemma}
\begin{document}
%

\title{Optimal Covering with Mobile Sensors in an Unbounded Region}
\author{\IEEEauthorblockN{Barun Gorain and Partha Sarathi Mandal}
\IEEEauthorblockA{Department of Mathematics\\
Indian Institute of Technology Guwahati, Guwahati-781039, India\\
Email: \{b.gorain,psm\}@iitg.ernet.in}}

\maketitle
\begin{abstract}
Covering a bounded region with minimum number of homogeneous sensor nodes is a NP-complete problem \cite{Li09}.
In this paper we have proposed an {\it id} based distributed algorithm for optimal coverage
in an unbounded region. The proposed algorithm guarantees maximum spreading in $O(\sqrt{n})$ rounds
without creating any coverage hole. The algorithm executes in synchronous rounds without exchanging any message.

We have also explained how our proposed algorithm can achieve optimal energy consumption and handle random
sensor node deployment for optimal spreading.
\end{abstract}

{\bf Key words:} Optimal spreading, Coverage, Triangular lattice, Mobile sensor, Wireless Networks.

\IEEEpeerreviewmaketitle

\section{Introduction}

Coverage with mobile sensor nodes in wireless sensor networks (WSNs) is one of the most important
issue in modern time. In general coverage is defined as the measurement of the quality of surveillance
of sensing function for a sensor network. In practice, sensor nodes are deployed in a region of interest
to get desire information from the region. Amount of information collected from the region of interest
depends on how well the target region is covered by the deployed sensor nodes.
For example, in the case of forest monitoring \cite{Turjman09,Tao09}, every location of the forest must be covered by
at least one sensor node such that if fire starts in a specific location it can be immediately detected.
The objective of the coverage problem is to improve the coverage performance when the WSN
is unable to satisfy the requirements and if it satisfies the requirement, then how to extend the network
lifetime with coverage guarantee. Coverage guarantee may be possible using controlled placement of sensor nodes
on a target region which are focussed on the papers \cite{Brooks04,Krishnamurthy05}. Deployment of sensor nodes
on a remote area is also a challenging problem.
Controlled placement is not always possible on any remote area. Random deployment of sensor nodes may be
possible using some airdrop \cite{Ishizuka04,Tang06} on a remote region of interest. There are many works
\cite{Brooks04,Ishizuka04,Krishnamurthy05,Tang06,Younis08} proposed in literature on node-deployment strategy.
Random deployment of static sensor nodes on a target region may not guarantee complete coverage.
Introducing limited mobility over the static sensor nodes, it is possible to improve the coverage by
reducing overlaps with neighboring nodes and allowing them to move towards the uncovered region.

In wireless mobile sensor networks, several heuristic algorithms have been proposed in literature to
cover a bounded region with mobile sensor nodes \cite {Ma,Saravi09,WangCP03,Wang}. But none of the
approaches discussed about the optimal coverage or maximum spreading for an unbounded region with
a given set of mobile sensor nodes.
In this proposed work we have designed a distributed  algorithm for optimal coverage
(maximum spreading) in an unbounded region with a given set of mobile sensor nodes.
Initially all nodes are deployed in a two dimensional plane, after that the nodes move and place
themselves at the vertices of an equilateral triangular tiling to achieve maximum spreading \cite {Bartolini10,Ma}.
The condition for the spreading is that there is no uncover region surrounded by the sensor nodes.
We use the term `coverage hole' or simple `hole' for uncover region and `node' for sensor node
throughout the paper.

\subsection{Our Contribution}
In this work we have proposed a distributed synchronous algorithm for optimal coverage
{\it i.e.}, maximum spreading on an unbounded region for a given number of mobile sensor nodes.
The proposed algorithm is {\it id} based, no message exchange is required for the
execution of the algorithm. The algorithm is executed in synchronous rounds and $O(\sqrt{n})$ rounds are
required for maximum spreading, where $n$ is the total number of sensor nodes.
Our proposed algorithm can achieve optimal energy consumption if each node first computes final
destination for optimal spreading and then moves straight to the destination.
The algorithm can handle random node deployment instead of strategic deployment at a point
on the plane.

Bartolini {\it et al.} in \cite{Bartolini10} proposed a `Push \& Pull' algorithm to complete uniform
coverage on a bounded region where multiple time message broadcasting is required to coordinate
their movements. Objective of the node movement is to form hexagonal tiling, which is similar
to our approach. Though the authors proved finite time convergency of their algorithm to complete
uniform coverage but they have not given the complexity measure in terms of number of the sensor node
for termination of the algorithm. In contrary to the above approach, our algorithm forms
an equilateral triangular tiling in $\left\lceil-\frac{1}{2}+\sqrt{\frac{4n-1}{3}}~\right\rceil$
rounds, which is the optimal coverage for given set of $n$ nodes. With respect to the energy
consumption perspective our approach outperforms than the previous one because no message exchange
is required.

\section{Related Works}
Several approaches has been found in the literature to overcome coverage problem in mobile wireless
sensor networks. Wang {\it et al.} in the paper \cite{Wang} proposed three movement assisted sensor
deployment algorithms,  which are  VEC (vector based algorithm), VOR (voronoi based algorithm) and Minimax.
In this paper the voronoi diagram is used to identify coverage holes. Then in VEC algorithm,
the sensor nodes which not cover their respective voronoi cell are pushed to fills the coverage gaps.
The sensor nodes are moved to the farthest voronoi vertex in VOR and minimize the coverage hole.
Minimax algorithm moves the sensor nodes to close to the farthest voronoi vertex avoiding the
generation of new holes.

There is another approach proposed by Cao {\it et al.} in \cite{WangCP03} which is based on voronoi diagram.
A network of static and mobile sensor nodes are considered in this work. Firstly, the static and mobile
both sensor nodes are randomly deployed over a region and the static nodes find the presence of hole by
computing their local voronoi cell. Then the static nodes on the boundary of holes bids for mobile sensor
nodes to move to the farthest voronoi point. The authors proposed two kinds of bidding protocol, one is
distance based and other is price based. In the distance based protocol the static nodes on hole boundary
bid for the mobile nodes which is closest to the corresponding static node. In the price based protocol,
the mobile nodes bidden is the cheapest one with respect to coverage improvement. Mobile nodes can move
one hole to another hole if the coverage is improved compare to the prior movement. Next another proxy
based protocol is defined where nodes will first calculate their final location and will move to their
target location only once.

The ATRI algorithm proposed by Ma {\it et al.} \cite{Ma} makes the overall deployment layout close to equilateral
triangulation by moving sensor nodes after initial random deployment. Here if entire monitoring area is
triangulated by delaunay triangulation and the triangles are equilateral with edge length $\sqrt{3}r$,
then the coverage area of $n$ nodes is maximum without coverage hole. Where $r$ is the sensing radius of
the sensor nodes. In their proposed algorithm they assume completely connected network and they divide the
sensing disk of every sensor nodes in to six equal sectors. In every iteration of their algorithm, each node
wants to adjust its position with the nearest neighbor for each sector in a distance $\sqrt{3}r$ and calculate
movement vector. Adding the movement vectors of the corresponding six sectors, the sensor nodes moves in the
direction of the resultant vector. After every iteration the layout of the network will become closer to the
optimum layout.

One centralized approach of sensor movement is proposed by Saravi {\it et al.} in \cite {Saravi09}. In this approach the
given area is delaunay triangulated where each triangles are equilateral with side $\sqrt{3}r$. Each of the sensor
nodes has information of the delaunay vertices of the triangulation. After random deployment of the sensor nodes,
each nodes will move to their closest delaunay vertex.

Bartolini {\it et al.} in \cite {Bartolini10}, proposed an algorithm: Push \& Pull for complete coverage in a bounded region. In this approach sensor nodes make an equilateral triangular tiling on a plane by their movements.
A regular hexagonal structure is formed by an initiating node, located at center with an arbitrary choice of six neighbouring nodes and their appropriate movements. A node whose hexagonal structure is already formed is called a snapped otherwise unsnapped node. If some unsnapped sensor nodes are located inside the hexagon of a snapped sensor nodes, then the snapped node Push these unsnapped nodes to the lower density area of the plane. If some snapped nodes detect any coverage hole adjacent to their hexagon, they send hole trigger messages for attracting unsnapped nodes for filling the coverage hole. The process of message sending and attracting nodes is called Pull activity. If two different clusters with their tiling have different orientation then tiling marge activity is applied to marge into a common tiling.

The rest of the paper is organized as follows: In section \ref{bas} we discuss the formal description and theoretical analysis of our approach. In section \ref{cover} the underlying system model and the proposed algorithm is described and finally section \ref{concl} conclude the paper.

\section{Basic Idea}
\label{bas}
Covering a bounded region with minimum number of homogeneous sensor nodes is a NP-complete problem \cite{Li09}.
But when the target region is unbounded, optimal coverage {\it i.e.}, maximum spreading with a given number of homogeneous sensor nodes can be achieved if sensor nodes are placed at the vertices of a triangular lattice
arrangement \cite{Bartolini10,Brass:2007,Ma,Wang11} with edge length of triangle is $\sqrt{3}r$. Where $r$
is the sensing range of each sensor node. The hexagonal tiling corresponding to a triangular lattice arrangement gives the optimal coverage and density as explain in the paper \cite{Bartolini10}. After deployment of a set of mobile sensor nodes, if all nodes move to all vertices of some triangles of
such triangular lattice arrangement then the total coverage is optimal \cite{Ma}.
\begin{figure}[h]
  \centering
  \includegraphics[width=0.2\textwidth]{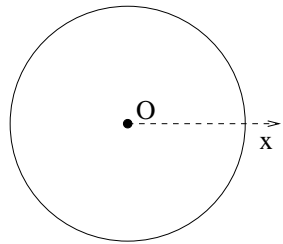}
  \caption{Showing the initial deployment at $O$}
  \label{fig:point1}
\end{figure}

In our approach, there are two {\it status} of each sensor node; {\it stable} and {\it unstable}. In every round some of the  nodes become {\it stable} at their desire locations. All {\it stable} nodes stop execution further for their movements. The remanning nodes are {\it unstable} in the round.  Every {\it unstable} nodes keep moving in every round until they become {\it stable} at their desire locations.
For simplicity, in our proposed algorithm we assume point sensor, {\it i.e.}, a geometric point on a plane but same technique is also applicable for non-point sensors too, which has explained in section \ref{concl}. Initially, $n$ nodes with $id$ $\{0, 1, 2, \cdots, n-1\}$ are deployed at a point say, $O$ as shown in the Fig. \ref{fig:point1}, where $O$ is considered as origin.
The aim of our approach is to give a movement strategy such that every node can moves to a vertex of a hexagonal tiling of the plane.
In each step of the algorithm, the unstable nodes move to the positions for which the hexagon corresponding to every stable node is completed.
In the initial round, the node with {\it id} 0 becomes stable in its initial position and all other nodes are decomposed into six different subgroups and move to six different locations to complete the hexagonal structure corresponding to the position of node with {\it id} $0$. The initial deployment of all nodes and the direction of movements in this round is shown in the Fig. \ref{fig:point1} and Fig. \ref{fig:point2} respectively. The Fig. \ref{fig:point3} shows the  equilateral triangular tiling after execution of the initial round.

\begin{figure}[h]
  \centering
  \includegraphics[width=0.3\textwidth]{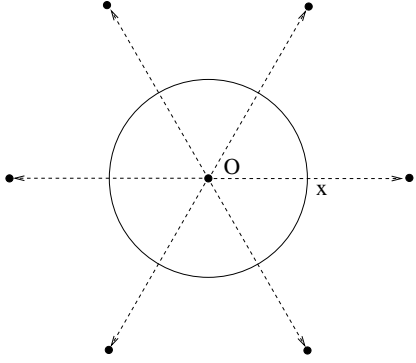}
  \caption{The directions to move in round 0}
  \label{fig:point2}
\end{figure}
 \begin{figure}[h]
  \centering
  \includegraphics[width=0.3\textwidth]{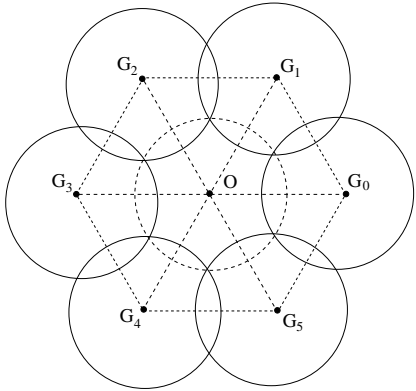}
  \caption{Formation of equilateral triangular tiling at the end of round 0}
  \label{fig:point3}
\end{figure}

For a node with $id$ $i$, the variable $(X_i,Y_i)$ stores its current position and $\theta_i$ stores the angle
between the direction of its movement and the positive $x-$axis. In every round, all unstable nodes update their positions $(X_i,Y_i)$  to their next positions and then move to the updated positions. There is another variable {\it type} for each unstable nodes, which is either {\it single} or {\it double}.
All the unstable nodes with {\it type single} update their positions using the Eqn. \ref{eq:eq1}.
These {\it type} of nodes do not change their direction of movements till reach to the {\it stable} state.
\begin{eqnarray}\label{eq:eq1}
X_i&=&X_i+\sqrt{3}r\cos\theta_i \nonumber \\
Y_i&=&Y_i+\sqrt{3}r\sin\theta_i
\end{eqnarray}

The nodes with {\it type double} located at the same position in a round, decompose into two subgroups
based on the value $val$, $val= -\frac{1}{2}+\sqrt{\left(\frac{m\_id_i-j}{3}+\frac{1}{4}\right)} $, where $m\_id_i$ and $j$ are explained in section \ref{sysM}. Then the node updates its next position using the Eqn. \ref{eq:eq1} if $val$ is not an integer, otherwise updates its next position using the Eqn. \ref{eq:eq2} and also updates its {\it type} as {\it single}.
In the outer layer of the Fig. \ref{fig:round3}, one and two directions are shown for the nodes with {\it type single} and {\it type double} respectively.

\begin{eqnarray}\label{eq:eq2}
X_i&=&X_i+\sqrt{3}r\cos\left(\theta_i+\frac{\pi}{3}\right)\nonumber \\
Y_i&=&Y_i+\sqrt{3}r\sin\left(\theta_i+\frac{\pi}{3}\right)
\end{eqnarray}
The execution of round 1 is shown in the Fig. \ref{fig:point4} and Fig. \ref{fig:point5}.

\begin{figure}[h]
  \centering
  \includegraphics[width=0.4\textwidth]{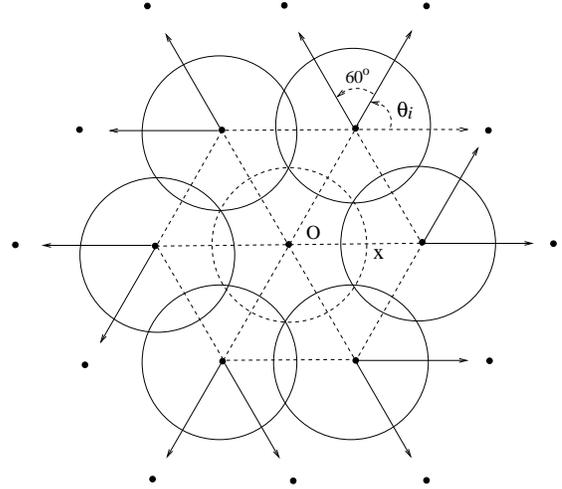}
  \caption{The locations and the corresponding directions to move in round 1 from the outer layer locations of round 0}
 \label{fig:point4}
\end{figure}
\begin{figure}[h]
  \centering
  \includegraphics[width=0.3\textwidth]{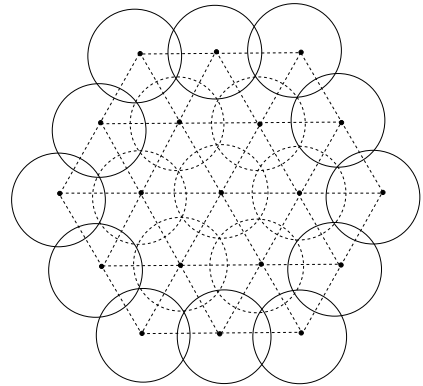}
  \caption{Formation of hexagonal structures surrounded by each location of previous round 0 at the end of the round 1}
  \label{fig:point5}
\end{figure}

\begin{lemma}\label{lem:double}
There are exactly six different groups of nodes with {\it type double} in each round of execution of the algorithm \textsc{MaxCover}.
\end{lemma}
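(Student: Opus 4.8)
The plan is to prove the statement by induction on the round number $t$, carrying the invariant that every round of \textsc{MaxCover} contains exactly six groups of type \emph{double}. The driving observation is that the decomposition rule sends each double group to exactly one double successor while it only ever emits nodes of type \emph{single}, and that a \emph{single} node never reverts to \emph{double}, since such a node keeps its direction of movement until it stabilizes. For the base case I would invoke the initialization of round $t=0$: node $0$ stays stable at $O$, and the remaining $n-1$ nodes are split into exactly six subgroups, one per launch direction $\theta=\frac{c\pi}{3}$ for $c=0,1,\dots,5$, each carrying type \emph{double}; hence round $0$ has exactly six double groups.

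For the inductive step, assume round $t$ holds six double groups, each a set of co-located nodes sharing one direction $\theta_i$. I would fix one such group $G$ and follow the update. Every member computes $val=-\frac12+\sqrt{\frac{m\_id_i-j}{3}+\frac14}$; since $val=k$ is equivalent to $m\_id_i-j=3k(k+1)$ for a non-negative integer $k$, the members split into those with integer $val$, which apply Eqn.~\ref{eq:eq2}, rotate by $\frac{\pi}{3}$, and become \emph{single}, and those with non-integer $val$, which apply Eqn.~\ref{eq:eq1} and stay \emph{double}. The latter all retain the common direction $\theta_i$ and stay co-located, so they constitute a single double group in round $t+1$ and no double group can fission into two. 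Combining this with the two facts above — integer-$val$ members depart as \emph{single}, and pre-existing single groups remain single — each of the six double groups of round $t$ contributes exactly one double group to round $t+1$ and nothing else does, which yields six again.

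The step I expect to be the real obstacle is proving that the surviving, non-integer-$val$ part of each group is \emph{nonempty} in every round during which the algorithm is still executing; otherwise a double group could convert entirely to \emph{single} and the count would drop below six. This is exactly where the $m\_id_i$ assignment and the group index $j$ of Section~\ref{sysM} must be used: one must verify from the counting of lattice positions that in each round a double group sheds precisely the node(s) sitting at a lattice corner — the integer-$val$ members — while retaining its interior members, and that the six fronts are not exhausted before the termination bound $\lceil -\frac12+\sqrt{\frac{4n-1}{3}}\,\rceil$ is reached. Once this nonemptiness-until-termination is in hand, the type bookkeeping above closes the induction at once.
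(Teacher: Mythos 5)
Your proposal takes essentially the same route as the paper's proof: induction on the round number, with the observation that round $0$ creates exactly six groups of type \emph{double} and that in every later round each such group splits into a subgroup that turns \emph{single} and a subgroup that remains \emph{double} (and co-located with a common $\theta_i$), so the count of double groups is preserved. The nonemptiness of the surviving double subgroup, which you flag as the remaining obstacle, is a real issue, but the paper's own proof does not address it either --- it simply asserts the decomposition into two subgroups --- so on this point you are, if anything, more careful than the original.
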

\begin{proof}
In the initial round or round 0, all unstable nodes are decomposed into six different groups (refer steps \ref{initialStart}-\ref{initialEnd} of the algorithm \textsc{MaxCover}). The {\it type} of the nodes of every group are {\it double} at the end of this round. But in every round each group of nodes of {\it type} double decomposed into two subgroups according to the steps \ref{decomposeStart} - \ref{decomposeEnd} of the algorithm based on their ids. During this decomposition all nodes in a subgroup change their {\it type} to {\it single} and the {\it type} of the nodes in other subgroup remain {\it double}. So, total number of group with  {\it type double} remain unchanged in every round of execution of the proposed algorithm \textsc{MaxCover}.
\end{proof}

The Fig. \ref{fig:round3} and Fig. \ref{fig:round4} are showing the execution of the algorithm in round 2, where all the nodes with {\it type double} move in two different directions and the nodes with  {\it type single} move in the same direction as in the previous round.

\begin{lemma}\label{lem1:stable}
There are $6k$ number of nodes become {\it stable} in round $k$ for $k\ge1$.
\end{lemma}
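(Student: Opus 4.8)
The plan is to reduce this counting statement to the geometry of the triangular lattice and then to verify the count through centered hexagonal numbers. First I set up lattice coordinates with two basis vectors of length $\sqrt{3}r$ meeting at $60^\circ$, and I organise the lattice sites into concentric hexagonal rings about the origin $O$, where ring $k$ is the set of sites at lattice (hexagonal) distance $k$ from $O$. A standard fact about the triangular lattice is that ring $k$ contains exactly $6k$ sites for $k\ge 1$ (six corners together with $k-1$ interior sites on each of the six edges), so that the sites in rings $0,1,\dots,k$ number $1+3k(k+1)$, the $k$-th centered hexagonal number. The heart of the argument is then the following invariant, which I would prove by induction on $k$: at the end of round $k$ the \emph{stable} nodes occupy precisely the sites of rings $0,1,\dots,k$, one node per site.

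Granting the invariant, the lemma is immediate: the number of nodes that become \emph{stable} in round $k$ equals the number of sites in ring $k$, namely $\bigl(1+3k(k+1)\bigr)-\bigl(1+3(k-1)k\bigr)=3k\bigl[(k+1)-(k-1)\bigr]=6k$. The base cases are read off from the descriptions of rounds $0$ and $1$ (Figs.~\ref{fig:point3} and \ref{fig:point5}): after round $0$ only the central node $0$ is stable, and after round $1$ the six corner sites of ring $1$ have completed hexagons and become stable. Thus it remains only to carry the invariant through one round.

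For the inductive step I would strengthen the invariant so that it also records the configuration of the unstable nodes: after round $k-1$ the six groups of \emph{type double} promised by Lemma~\ref{lem:double} sit at the six corners of ring $k$, while the \emph{type single} nodes are positioned so as to complete the remaining (edge) sites of ring $k$. In round $k$ each \emph{double} group, moving radially by Eqn.~\ref{eq:eq1}, leaves one node at its ring-$k$ corner and advances to the corresponding corner of ring $k+1$; simultaneously the decomposition rule (steps~\ref{decomposeStart}--\ref{decomposeEnd}) splits off, via Eqn.~\ref{eq:eq2}, exactly the \emph{single} nodes needed so that, together with the singles already in transit, every interior edge site of ring $k$ is occupied. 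One then checks that each site of ring $k$ now has all six lattice neighbours present and therefore stabilises, while the surviving unstable nodes reproduce the same configuration one ring further out, closing the induction. By the six-fold rotational symmetry of the construction it suffices to verify these claims within a single $60^\circ$ sector.

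The main obstacle is precisely this last verification: showing that the decomposition releases the right number of \emph{single} nodes at the right rounds, so that each edge of ring $k$ receives exactly $k-1$ of them and no site is left empty or doubly filled. I would attack it through the quantity $val=-\frac{1}{2}+\sqrt{\frac{m\_id_i-j}{3}+\frac{1}{4}}$: solving the condition that $val$ be an integer shows that a node turns to \emph{type single} exactly when $m\_id_i-j$ equals a ``hexagonal gap'' $3t(t+1)$, and the successive gaps $0,6,18,36,\dots$ differ by $6,12,18,\dots$, i.e.\ by the ring sizes themselves. A short induction within one sector then shows that the number of nodes stabilising there during round $k$ is one corner plus $k-1$ edge sites, that is $k$; multiplying by the six symmetric sectors yields $6k$ and re-establishes the invariant.
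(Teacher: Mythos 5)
Your route is genuinely different from the paper's, and considerably heavier. The paper proves this lemma by a purely combinatorial induction on groups, with no geometry at all: in every round exactly one node per group becomes \emph{stable}, so the count of newly stable nodes in a round equals the number of groups entering that round; by Lemma~\ref{lem:double} exactly six of the $6m$ groups present in round $m$ have \emph{type double}, each of those splits into two subgroups (one turned \emph{single} via Eqn.~\ref{eq:eq2}, one continuing as \emph{double} via Eqn.~\ref{eq:eq1}), so the group count goes from $6m$ to $6m-6+12=6(m+1)$, and $6(m+1)$ nodes stabilise in round $m+1$. You instead prove the count as a corollary of a geometric invariant --- that the stable nodes after round $k$ occupy exactly the lattice sites of hexagonal rings $0,\dots,k$ --- and read off $6k$ as the difference of consecutive centered hexagonal numbers. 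Your invariant is strictly stronger: if carried through it would also deliver Lemma~\ref{lem3:tiling}, and your analysis of when $val$ is an integer (namely $m\_id_i-j=3t(t+1)$, with successive gaps $6,12,18,\dots$) is a correct and useful observation that the paper never makes explicit. The cost is that the entire burden lands on the sector-by-sector verification you defer to the last paragraph --- that the $val$ rule releases exactly the right \emph{single} subgroups so that each edge of ring $k$ receives exactly $k-1$ of them --- and that step is only sketched, not proved. For the bare count $6k$ none of that placement bookkeeping is needed: it suffices that each \emph{double} group splits into exactly two nonempty subgroups per round while every group loses one node to stability, which is what the paper uses. So your argument is a valid (and more informative) plan, but as written it proves the easy part and postpones the hard part, whereas the paper's proof of this particular lemma avoids the hard part altogether.
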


\begin{proof}
We use induction on the number of rounds to prove this Lemma.
At the initial round of the algorithm, all unstable nodes are decomposed into six different groups and the nodes with id $1$ to $6$ are assign as $min\_id$ of these groups (refer step \ref{setId1} and step \ref{setId2} of the algorithm \textsc{MaxCover}). All the nodes with $id = min\_id$ becomes {\it stable} in the next round, {\it i.e.}, in round $1$. Therefore $6 \times 1$  nodes become {\it stable} in round $1$. Hence the lemma is true for $k=1$.

Suppose the statement of this lemma is true for round $m$, {\it i.e.}, $6m$ nodes become {\it stable} in round $m$. Now, let us consider the execution of the algorithm in the round $m$. There are $6m$ groups in round $m$ since in every round of the algorithm, one node from each group become {\it stable}.
According to the Lemma \ref{lem:double} six groups out of these $6m$ groups are with node's {\it type double} and the remaining $6m-6$ groups are with node's {\it type single}. All the groups with node's {\it type double} decompose into two subgroups and hence total twelve subgroups are formed. So total number of groups including
nodes with {\it type single} and {\it double} are $6m-6+12$ at the end of the round $m$. Hence there are $6(m+1)$ numbers of groups at the beginning of the round $m+1$.  As one node from each group become {\it stable} total $6(m+1)$ nodes become {\it stable} in the round $m+1$. Therefore by the mathematical induction, the lemma follows.
\end{proof}

\begin{figure}[h]
  \centering
  \includegraphics[width=0.35\textwidth]{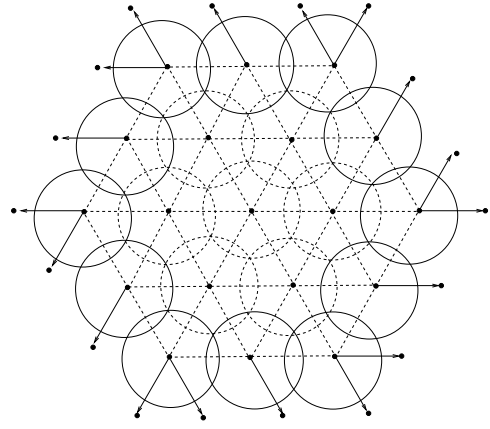}
  \caption{The locations and corresponding directions to move in round 2 from the outer layer locations of round 1}
 \label{fig:round3}
\end{figure}

\begin{figure}[h]
  \centering
  \includegraphics[width=0.35\textwidth]{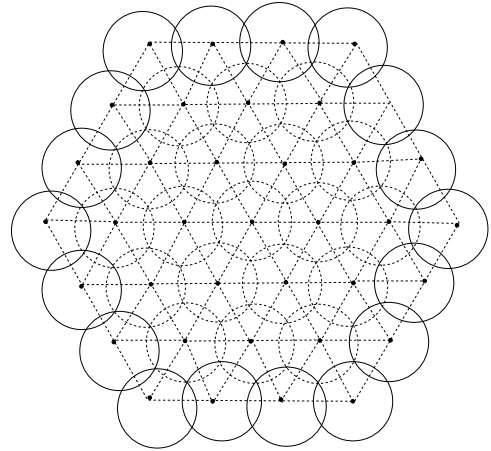}
  \caption{Showing the hexagonal tiling at the end of round 2}
 \label{fig:round4}
\end{figure}

\begin{lemma}\label{lem2:total}
After $\left\lceil-\frac{1}{2}+\sqrt{\frac{4n-1}{3}}~\right\rceil$ rounds of execution, all nodes become {\it stable}.
\end{lemma}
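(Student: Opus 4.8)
The plan is to convert the per-round stabilization counts of Lemma~\ref{lem1:stable} into a cumulative total and then invert that total against $n$. First I would treat the initial round separately: in round $0$ exactly one node (the node with \emph{id} $0$) becomes \emph{stable}, while by Lemma~\ref{lem1:stable} exactly $6k$ nodes become \emph{stable} in each subsequent round $k\ge 1$. Adding these contributions, the number of nodes that have become \emph{stable} by the end of round $K$ is
\begin{equation}
1+\sum_{k=1}^{K}6k \;=\; 1+3K(K+1) \;=\; 3K^2+3K+1,
\end{equation}
which is the $K$-th centered hexagonal number, i.e. precisely the count of triangular-lattice vertices lying in the first $K$ concentric rings about $O$. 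This identity is the crux that links the algorithm's dynamics to a closed-form capacity.

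Next I would note that $3K^2+3K+1$ is strictly increasing in $K$, so the process stops in the first round whose accumulated ring capacity absorbs all $n$ nodes, namely the smallest integer $K$ satisfying $3K^2+3K+1\ge n$. I would solve this quadratic inequality for its nonnegative root and then apply the ceiling function; since the statement asks only for the terminal round index, I would stop at isolating the positive root and invoking $\lceil\cdot\rceil$ to reach the bound stated in the lemma, rather than carrying out the intermediate simplifications.

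The main obstacle is the last, possibly incomplete, ring. Lemma~\ref{lem1:stable} asserts that \emph{exactly} $6k$ nodes stabilize in round $k$, but this implicitly assumes at least $6k$ \emph{unstable} nodes still remain; in the terminal round the number of surviving \emph{unstable} nodes may be strictly fewer than $6K$. I would therefore argue that whatever \emph{unstable} nodes are present at the start of round $K$ number at most $6K$ and all attain lattice vertices within that round, so the cumulative-capacity inequality still pins termination to the stated $K$. A secondary care point is the off-by-one in the round index together with the boundary behaviour of the ceiling: I would verify the edge cases (in particular when $n$ exactly fills the outer ring versus when it only partially fills it) to confirm that $\lceil\cdot\rceil$ selects the correct terminal round in both situations.
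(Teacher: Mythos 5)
Your overall route is the same one the paper takes: combine the single node stabilized in round $0$ with the $6k$ nodes of Lemma~\ref{lem1:stable} to get the cumulative count $1+\sum_{k=1}^{K}6k=3K^2+3K+1$, then take the least $K$ with $3K^2+3K+1\ge n$. Your added discussion of the terminal, possibly incomplete ring is a genuine improvement over the paper's version, which applies Lemma~\ref{lem1:stable} verbatim even though in the last round fewer than $6K$ unstable nodes may survive.

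The gap is in the one step you decline to carry out. Solving $3K^2+3K+1-n\ge 0$ for the nonnegative root gives
\[
K\;\ge\;-\frac{1}{2}+\frac{1}{2}\sqrt{\frac{4n-1}{3}}\;=\;-\frac{1}{2}+\sqrt{\frac{4n-1}{12}},
\]
not $-\frac{1}{2}+\sqrt{\frac{4n-1}{3}}$ as the lemma (and the paper's own proof) asserts; the two differ by a factor of $2$ on the square-root term. A concrete check: for $n=7$ a single round suffices ($1+6=7$), and $\left\lceil-\frac{1}{2}+\frac{1}{2}\sqrt{\frac{27}{3}}\right\rceil=\lceil 1\rceil=1$, whereas the lemma's expression evaluates to $\lceil 2.5\rceil=3$. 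So you cannot ``stop at isolating the positive root'' and assert that it matches the displayed bound --- doing the algebra shows that it does not. Your argument correctly establishes termination after $\left\lceil-\frac{1}{2}+\sqrt{\frac{4n-1}{12}}\,\right\rceil$ rounds, which still yields the $O(\sqrt{n})$ claim of the subsequent theorem, but as a proof of the lemma as literally stated the final identification fails (and in fact exposes an algebra slip in the statement itself).
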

\begin{proof}
It is clear from the Lemma \ref{lem1:stable} that in every round $k$ ($\ge1$), $6k$ nodes becomes {\it stable} and the node with {\it id} $0$ becomes stable in the initial round. Let $m$ be the number of rounds required to {\it stable} $n$ nodes. Then
\begin{eqnarray}\label{eq:eq3}
    n &\le& 1+6\times 1+6\times 2+\cdots+6m\nonumber\\
       &=&1+3m(m+1)
\end{eqnarray}

Solving the above inequality  \ref{eq:eq3} we can get number of round $m$ in terms of total number of nodes $n$,  $m=\left\lceil-\frac{1}{2}+\sqrt{\frac{4n-1}{3}}~\right\rceil$.
\end{proof}
\begin{lemma}\label{lem3:tiling}
When all nodes become {\it stable} then their positions form an equilateral triangular tiling.
\end{lemma}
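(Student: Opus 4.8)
The plan is to prove a stronger, purely geometric invariant by induction on the round number and then read off the conclusion. Let $L$ be the triangular lattice generated by $\vec{u}=(\sqrt{3}r,0)$ and $\vec{w}=\sqrt{3}r(\cos\frac{\pi}{3},\sin\frac{\pi}{3})$, positioned so that $O$ is a lattice point, and for $k\ge 1$ let the \emph{$k$-th ring} $R_k$ be the set of the $6k$ points of $L$ at hexagonal (graph) distance exactly $k$ from $O$, with $R_0=\{O\}$. Any equilateral triangular tiling of edge $\sqrt{3}r$ is, up to rigid motion, exactly the point set $L$, and the lattice points inside a hexagonal disk are precisely $\bigcup_{k=0}^{m}R_k$; hence it suffices to prove the invariant that \emph{at the end of round $k$ the set of stable positions equals $\bigcup_{j=0}^{k}R_j$}.

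For the base cases, in round $0$ the node with $id$ $0$ is stable at $O=R_0$, and the remaining nodes are launched in the six directions $\theta\in\{0,\frac{\pi}{3},\frac{2\pi}{3},\pi,\frac{4\pi}{3},\frac{5\pi}{3}\}$ over a distance $\sqrt{3}r$ by Eqn.~\ref{eq:eq1}; the six positions they reach are exactly the corners of $R_1$, which together with $O$ form six equilateral triangles of edge $\sqrt{3}r$. The counting I need is already supplied: Lemma~\ref{lem1:stable} gives $6k$ newly stable nodes in round $k$, matching $|R_k|=6k$, while Lemma~\ref{lem:double} gives exactly six \emph{type double} groups, which I will identify with the six corners (the six lattice axes) of each ring.

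The inductive step is the core. Assume the invariant holds after round $m$, so the stable set is $\bigcup_{j\le m}R_j$ and the unstable outer-layer groups sit on $R_m$. Two facts close the geometry. First, every movement direction is a multiple of $\frac{\pi}{3}$: the round-$0$ directions are, Eqn.~\ref{eq:eq1} preserves the direction, and Eqn.~\ref{eq:eq2} rotates it by exactly $\frac{\pi}{3}$; since each step has length $\sqrt{3}r$, every move carries a point of $L$ to an adjacent point of $L$, so stable positions always lie in $L$. Second, the six \emph{double} groups occupy the six corners of $R_m$; when such a group splits, the subgroup keeping its direction (Eqn.~\ref{eq:eq1}) advances radially to the next corner of $R_{m+1}$ and stays \emph{double}, while the subgroup turning by $\frac{\pi}{3}$ (Eqn.~\ref{eq:eq2}) peels off to begin tracing the new side and becomes \emph{single}; meanwhile each previously spawned \emph{single} group advances one more step along its fixed side-direction, depositing the next interior point. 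Verifying that these $6(m+1)$ landing points are distinct and are precisely $R_{m+1}$ — i.e.\ that each of the six sides of $R_{m+1}$ receives its corner plus exactly $m$ interior points — completes the step, and because every point of $R_m$ then has all six of its $L$-neighbours present, all hexagons around $R_m$ are closed and no hole remains.

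The step I expect to be the main obstacle is exactly this last bookkeeping: showing that the split-and-slide mechanism deposits the right number of nodes on each of the six sides of the new ring, with no collisions and no lattice position of $R_{m+1}$ left empty. Concretely, one must show that a single group peeled off a given corner at round $j$ reaches the point $j\vec{u}+(m+1-j)\vec{w}$ in round $m+1$, so that the groups spawned at rounds $j=1,\dots,m$ fill the $m$ interior positions of that side bijectively; this is where the precise $id$-based decomposition (the quantity $val$ and the $min\_id$ assignment) must be shown to distribute nodes evenly across the six axes and advance them in lock-step. The angle-and-length facts guarantee the points land on $L$, but only this counting argument guarantees they tile $R_{m+1}$ without gaps. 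Once the invariant holds for every round, Lemma~\ref{lem2:total} shows the process terminates after $\left\lceil-\frac{1}{2}+\sqrt{\frac{4n-1}{3}}\,\right\rceil$ rounds with stable set $\bigcup_{j}R_j$, a hexagonal disk of lattice points, which is by definition an equilateral triangular tiling.
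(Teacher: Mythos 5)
Your proposal takes essentially the same route as the paper: induction on the round number, showing that each round deposits exactly the next hexagonal ring of the edge-$\sqrt{3}r$ triangular lattice around $O$, with the six \emph{double} groups tracing the corners and the \emph{single} groups filling in the sides. The bookkeeping you flag as the main obstacle --- verifying that the $6(m+1)$ landing points are distinct and are precisely the ring $R_{m+1}$ --- is exactly the step the paper's own inductive argument asserts without detail (``all new locations \ldots add a new layer and form an equilateral triangular tiling''), so your write-up is, if anything, more explicit than the paper about where the real work lies; to go beyond the paper you would need to carry out the $j\vec{u}+(m+1-j)\vec{w}$ calculation you sketch, but the paper does not do this either.
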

\begin{proof}
We prove this Lemma by mathematical induction on the number of rounds.

{\it Base Step:} In the initial round (round 0), all the nodes except node with {\it id} 0 move to six different
positions such that these coordinates of the positions make six equilateral triangles with a common
vertex $O$ as shown in the Fig. \ref{fig:point3}. Which is same as a hexagonal
structure centering at $O$. So, at the end of round $0$ ($k=0$), the positions of the sensor nodes
corresponds an equilateral triangular tiling.

{\it Inductive Step:} Let the positions of the sensor nodes makes an equilateral triangular
tiling after round $k$. One node in each group of {\it unstable} nodes become stable and other nodes
move to their respective neighboring vertices.
The coordinates of the neighboring vertices are computed using the equations
\ref{eq:eq1} and \ref{eq:eq2} depending on the {\it type} of the nodes as described above
and move to complete the hexagonal structure centering at location of each group in the round $k$.
All new location which are computed in round $k+1$ add a new layer and form an equilateral triangular
tiling over the existing structure at round $k$. Hence by induction, the lemma follows.
\end{proof}
\begin{theorem}
After $O(\sqrt{n})$ rounds of execution, all sensor nodes become {\it stable} with optimal spreading by $n$ nodes.
\end{theorem}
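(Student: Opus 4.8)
The plan is to assemble the theorem directly from the three lemmas already established, so the argument splits cleanly into a bound on the round complexity and a justification of optimality. First I would invoke Lemma~\ref{lem2:total}, which gives that every node becomes \emph{stable} after exactly $m = \left\lceil-\frac{1}{2}+\sqrt{\frac{4n-1}{3}}~\right\rceil$ rounds. The termination claim is therefore immediate; all that remains for the first half of the theorem is to show that this exact count is $O(\sqrt{n})$.

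To bound $m$ asymptotically I would drop the ceiling (which contributes at most an additive $1$) and estimate
\[
-\frac{1}{2}+\sqrt{\frac{4n-1}{3}} \;\le\; \sqrt{\frac{4n}{3}} \;=\; \frac{2}{\sqrt{3}}\,\sqrt{n}.
\]
Hence $m \le \frac{2}{\sqrt{3}}\sqrt{n}+1 = O(\sqrt{n})$, which is exactly the round complexity asserted in the theorem. This step is routine and carries no real difficulty.

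For the optimality half I would appeal to Lemma~\ref{lem3:tiling}: once all nodes are \emph{stable}, their positions form an equilateral triangular tiling. As established in Section~\ref{bas}, placing $n$ homogeneous sensor nodes at the vertices of a triangular lattice with edge length $\sqrt{3}r$ (equivalently, realizing the corresponding hexagonal tiling) achieves the maximum coverage area with no coverage hole, which is precisely what we mean by optimal spreading. Combining the $O(\sqrt{n})$ round bound with this final configuration yields the theorem.

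The main point requiring care is conceptual rather than computational: the optimality assertion rests entirely on the cited characterization that the triangular-lattice placement with edge length $\sqrt{3}r$ is the densest hole-free arrangement of $n$ homogeneous point sensors. I would therefore make explicit that the configuration produced by Lemma~\ref{lem3:tiling} is exactly that arrangement---edge length $\sqrt{3}r$ and, by the invariant carried through its inductive proof, no surrounded uncovered region---so that it inherits the optimality guaranteed by the cited result. With that identification in place, nothing further is needed beyond the asymptotic bound above.
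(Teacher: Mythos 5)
Your proposal is correct and follows essentially the same route as the paper, whose entire proof is simply to cite Lemma~\ref{lem2:total} for the round count and Lemma~\ref{lem3:tiling} for the tiling; you merely spell out the $O(\sqrt{n})$ estimate and the optimality identification that the paper leaves implicit.
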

\begin{proof}
Theorem follows from the Lemma \ref{lem2:total} and \ref{lem3:tiling}.
\end{proof}

\section{Proposed Covering Algorithm}
\label{cover}

\subsection{System Model}
\label{sysM}
Sensor nodes are homogeneous, {\it i.e.}, each node have same communication and sensing range. The sensing range is $r$ for each sensor node.
We have considered mobile sensor nodes and for simplicity we assume that the sensor nodes are
geometric points on a two dimensional plane, which can move freely over the plane. Each node has an unique {\it id} $\in \{0,1,2,\cdots,n-1\}$, where $n$ is the total number of sensor nodes. A node can compute coordinates locally based on a common coordinate system and can move to the location.

Each node $i$ maintains the variables: {\it status}, $type$, $min\_id$, $m\_id_i$, $\theta_i$, $(X_i,Y_i)$.
The {\it status} may be {\it stable} or {\it unstable}. The $status$ stable means that the node has placed itself at a right location, do not require to execute the algorithm further. Initial status of all nodes are unstable. The $status$ unstable means that the node is moving towards its right location except the node with {\it id} 0. In initial round, node with {\it id} 0 changes its {\it status} to stable without any movement.  The variable $type$ may either single or double.
A node with $type$ double has two possible direction of movement based on its $id$ but for the node with $type$ single has the same direction of its movement as the previous round.
The $min\_id$ is the minimum {\it id} among all unstable nodes in a group. The $m\_id_i$ is updated in each round for a node with $type$ double.
The $\theta_i$ is the angle made by the direction to move in a round with the positive direction of $x$-axis. The $\theta_i$ is updated in each round. The coordinates $(X_i,Y_i)$ is the position of node $i$, it is updated before movement in each round.

\subsection{The Algorithm}
\label{Algo}
Based on the above discussions, Algorithm: \ref{alg:MaxCover} \textsc{MaxCover} is proposed. Each unstable node
executes the same algorithm by computing location and then moves in every round to cover maximum area over
an unbounded region. The algorithm terminates when all nodes become stable.

\begin{algorithm}[h]
\caption{\textsc{MaxCover({\it i})}}
\textbf{\underline{Initial round (or round 0) for unstable node $i$:}}
\begin{algorithmic}[1]
\IF{$i=0$} \STATE status $\leftarrow$  stable. 
\ENDIF
\IF{$i\neq 0$  \label{initialStart} and $j  = i~(mod~ 6)$} 
     \STATE {status $\leftarrow$ unstable}
     \STATE{ $\theta_i \leftarrow \frac{\pi}{3}\times j$, $min\_id \leftarrow j$}  \label{setId1} 
        \IF{$j=0$ } \STATE {$min\_id \leftarrow 6$} \label{setId2}
        \ENDIF
     \STATE{ $(X_i,Y_i) \leftarrow (\sqrt{3}r \cos\theta_i,\sqrt{3}r \sin\theta_i)$}
     \STATE{type $\leftarrow$ double}
     \STATE{$m\_id_i\leftarrow i$}
     \STATE {move to $(X_i,Y_i)$} 
 \ENDIF\label{initialEnd}\\
 \vspace*{2mm}
\textbf{\underline{$k$-th round for unstable nodes with type = double}}:
\IF{$m\_id_i==min\_id$} \STATE {status $\leftarrow$ stable} \ENDIF
\IF{$m\_id_i\neq min\_id$} \label{decomposeStart}
\STATE{$val_i \leftarrow  - \frac{1}{2} + \sqrt{\left(\frac{m\_id_i-j}{3}+\frac{1}{4}\right)}$}
\IF{$val_i$ is an integer}
\STATE{$\theta_i \leftarrow \theta_i+\frac{\pi}{3}$}
\STATE{type $\leftarrow$ single}
\STATE{$min\_id \leftarrow 3k(k+1)+j$}
\STATE{ $(X_i,Y_i) \leftarrow (X_i+ \sqrt{3}r \cos\theta_i,Y_i+ \sqrt{3}r \sin\theta_i)$}
\ELSE \STATE{ $(X_i,Y_i) \leftarrow (X_i+ \sqrt{3}r \cos\theta_i,Y_i+ \sqrt{3}r \sin\theta_i)$}
\STATE{$m\_id_i \leftarrow m\_id_i-6$}
\STATE{$min\_id \leftarrow 3k(k+1)+j$}
\ENDIF
\STATE {Move to $(X_i,Y_i)$} 
\ENDIF \label{decomposeEnd}\\
 \vspace*{2mm}
\textbf{\underline{$k$-th round for unstable nodes with type = single}}:
\IF{$m\_id_i=min\_id$}
\STATE{$status \leftarrow stable$}
\ELSE \STATE{$min\_id \leftarrow 3k(k+1)+j$}
\STATE{ $(X_i,Y_i) \leftarrow (X_i+ \sqrt{3}r \cos\theta_i,Y_i+ \sqrt{3}r \sin\theta_i)$}
\STATE {move to $(X_i,Y_i)$} 
\ENDIF
\end{algorithmic}
\label{alg:MaxCover}
\end{algorithm}

\section{Optimal Energy Consumption}\label{sec:ENERGY}
According to our model initially all nodes are deployed at a point and reach their final positions in several rounds for optimal spreading. Total energy consumption for optimal spreading would be minimum if each node moves straight from initial to final location. The straight movement is possible if all nodes compute their final positions before start their movement. In our proposed algorithm \ref{alg:MaxCover}, all sensor nodes in each round calculate positions based on their id's and move without exchanging any message. Therefore, it is possible to modify the algorithm \ref{alg:MaxCover} such that each node calculates its final destination position first, then move directly to the final position. The final position $(X_f, Y_f)$ of each node can be calculated iteratively using the same algorithm \ref{alg:MaxCover} without executing the move to $(X_i, Y_i)$ until satisfies the stability conditions: $m\_id_i = min\_id$.

\section{Arbitrary sensor deployment}
If initial deployment of all sensor nodes are random over the plane instead of a point unlike our proposed model,
optimal spreading is still possible with following modification of the algorithm \ref{alg:MaxCover}.
After random deployment, node with id 0 informs its position $(x_0,y_0)$ to all other nodes, based on the position of the node with id 0, all other nodes compute their final destination positions locally and finally move to the respective position in one round. Each node can calculate their final position iteratively as explained above.
Optimal energy consumption may not be possible for the deployment of sensor nodes randomly over a plane.

\section{simulation result}
We have written a C++ program to simulate our algorithm for different number of sensor nodes with initial deployment at origin. The output of the program is the final coordinates of the nodes. With this data, we have plotted the corresponding sensing disks which shows the optimal spreading. The following Fig. \ref{fig:simulate1}, Fig. \ref{fig:simulate2} and Fig. \ref{fig:simulate3}, are the output of our simulation results for different input for number of sensor nodes. The node with id 0 become stable in the initial position of deployment and the sensing disk corresponding to the node is shown by the dotted circle.
\begin{figure}[h]
  \centering
  \includegraphics[width=.3\textwidth]{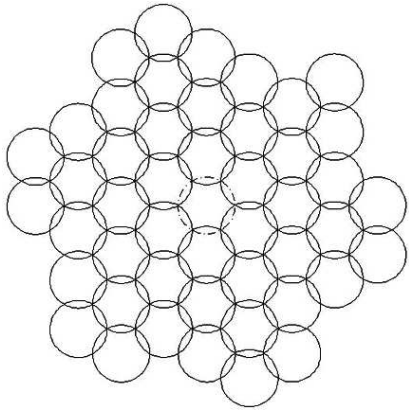}
  \caption{optimal spreading for $n=48$ nodes}
  \label{fig:simulate1}
\end{figure}
\begin{figure}[h]
  \centering
  \includegraphics[width=.3\textwidth]{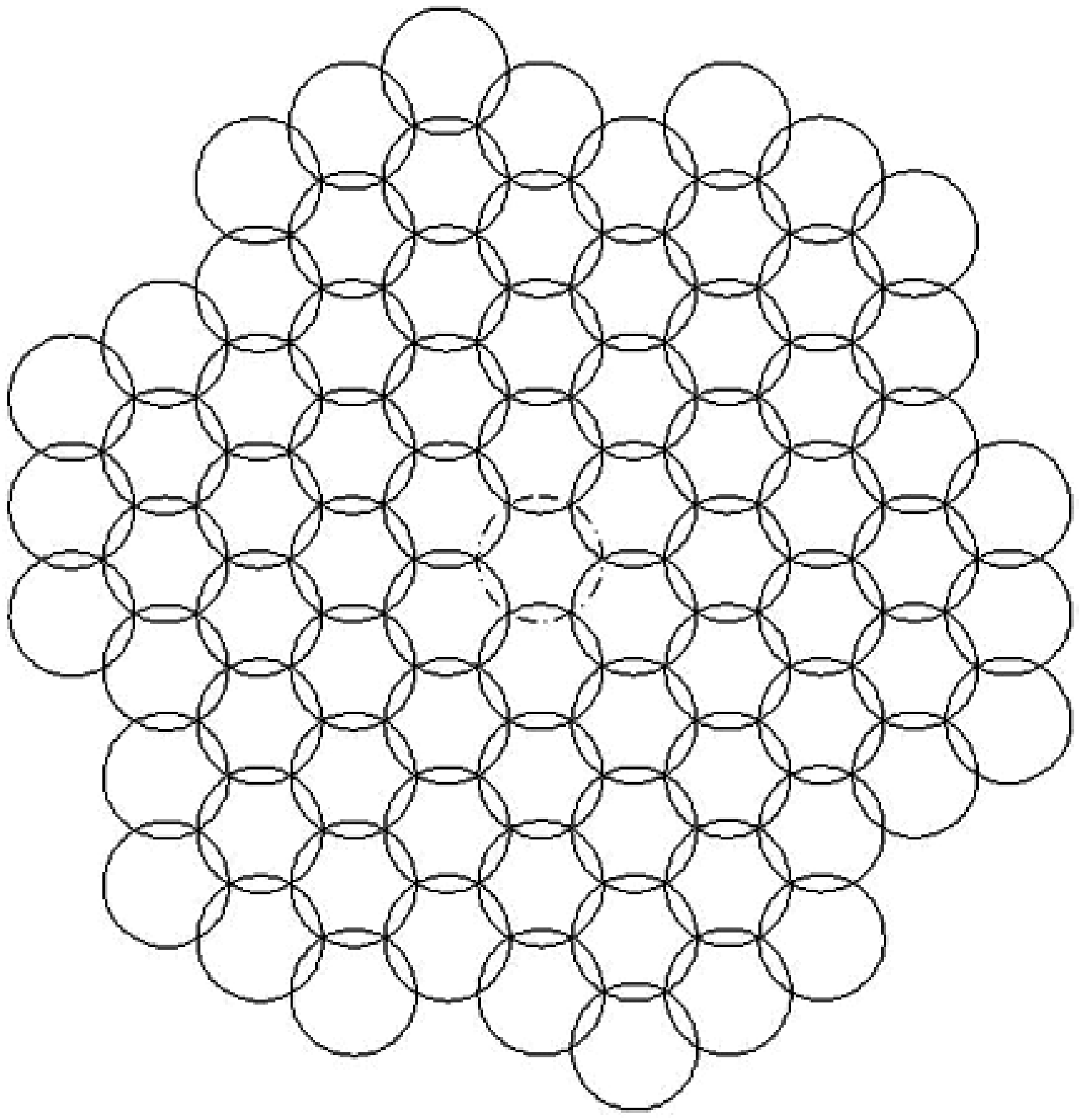}
  \caption{optimal spreading for $n=79$ nodes}
  \label{fig:simulate2}
\end{figure}
\begin{figure}[h]
 \centering
 \includegraphics[width=.3\textwidth]{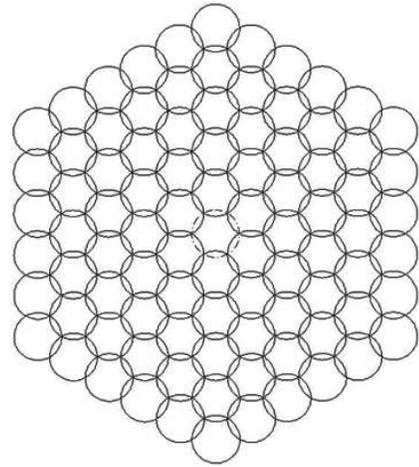}
 \caption{optimal spreading for n=91 nodes}
 \label{fig:simulate3}
\end{figure}

\section{Conclusion}
\label{concl}
In this paper we have presented a distributed synchronous algorithm: \textsc{MaxCover}
for maximum spreading without creating any coverage hole on an unbounded region.
The movement of each sensor node only depends on its unique {\it id} and present position.
No message exchange is required among the sensor nodes for execution of the algorithm. The number of rounds required for optimal spreading is $O(\sqrt{n})$.

Though in this paper we have assumed that all the sensor nodes are geometric points on a plane and initially all nodes are deployed at a point but same algorithm is also applicable for real (non-point) mobile sensors with following minor modifications of initial deployment. The nodes should be deployed in a place in such a way that they can form a completely connected graph with respect to their transmission range and they should have the knowledge of the position of the node with $id$ 0. It may be possible to have either prior knowledge of the location of node with $id$ 0 or after deployment the node with $id$ 0 informs its location to all other nodes by a broadcast. In each round, all unstable nodes move, either at the calculated location or any adjacent location of the calculated location.  But during calculation of the next location, all unstable nodes should use calculated location of the previous round. All nodes which are allowed to be stable in any round should finally move to their calculated locations.

We have explained how proposed algorithm can be modified to achieve optimal energy consumption and how the algorithm can handle random deployment of sensor nodes. In future we will try to focus on the covering in presence of boundary and obstacles.

\bibliographystyle{plain}
\bibliography{bib}
\end{document}